\documentclass[noshowpacs,pre,preprintnumbers,superscriptaddress,amsmath,amssymb,floatfix]{revtex4}
\usepackage[caption=false]{subfig}
\usepackage{graphicx}
\usepackage{color}
\usepackage{placeins}
\usepackage[normalem]{ulem}
\usepackage[colorlinks=true,citecolor=blue,linkcolor=blue,urlcolor=blue]{hyperref}
\usepackage{physics}

\usepackage{amsthm}
\usepackage{mathtools}

\newtheorem{theorem}{Theorem}

\usepackage[caption=false]{subfig}

    
\begin{document}

\title{Leggett-Garg-like Inequalities from a Correlation Matrix Construction}
    \author{Dana Ben Porath}
    \thanks{Formerly Dana Vaknin}
    \affiliation{Faculty of Engineering and the Institute of Nanotechnology and Advanced Materials, Bar Ilan University, Ramat Gan, Israel}
    \author{Eliahu Cohen}
    \email[]{eliahu.cohen@biu.ac.il}
	\affiliation{Faculty of Engineering and the Institute of Nanotechnology and Advanced Materials, Bar Ilan University, Ramat Gan, Israel}
    \date{\today}

\begin{abstract}
The Leggett-Garg Inequality (LGI) constrains, under certain fundamental assumptions, the correlations between measurements of a quantity $Q$ at different times. 
Here we analyze the LGI, and propose similar but somewhat more elaborate inequalities, employing a technique that utilizes the mathematical properties of correlation matrices, which was recently proposed in the context of nonlocal correlations. 
We also find that this technique can be applied to inequalities that combine correlations between different times (as in LGI)  and correlations between different locations (as in Bell inequalities). All the proposed bounds include additional correlations compared to the original ones and also lead to a particular form of complementarity. A possible experimental realization and some applications are briefly discussed.
\end{abstract}

\maketitle

\section{Introduction}

Leggett and Garg, in their seminal work \cite{legget-prl1985}, provided constraints on the correlations between measurements of a single quantity at different times.
They showed that given the definition $C_{ij}\equiv \langle Q_i Q_j \rangle$ for the correlation between two measurements of a quantity $Q$ at times $t_i$ and $t_j$, the sum $|C_{12} + C_{23} + C_{34} - C_{14}|$ is bounded by 2 in any scenario that maintains ``macrorealism'' and ``non-invasive measurability'' \cite{emary-iop2013}.
Such a scenario represents the classical physics view of a macroscopic system, as a system that cannot be in two or more states at the same time, and in which it is possible to measure the state with only an arbitrarily small perturbation of it.
Determining whether Leggett-Garg Inequality (LGI) holds for a given system, assists in distinguishing between systems that obey this classical view and those that exhibit non-classical behavior (in the particular sense linked to  macrorealism).
If a system violates the LGI it necessarily exhibits non-classical behavior. However, recent works have shown that the contrary is not always true, i.e., a system can satisfy the LGI but still violate macroreaslism \cite{majidy-pra2019,majidy-pra2021}.
The initial motivation for the LGI was using this method to determine whether quantum coherence appears even in macroscopic systems.
Later works focused on experimentally finding LGI violations in microscopic quantum systems using various measurement types such as ideal negative measurements \cite{knee-naturecomm2012}.
A recent work by Shenoy {\it et al.} \cite{shenoy-pra2017} presents the applicability of LGI in the area of quantum cryptography. They showed how the amount of violation of the LGI indicates that a hacking attempt was made during the quantum key distribution protocol \cite{bb84-arxiv2020}.

The LGI relates to bounds on correlations of a single quantity measured typically on the same system at different times by the same party, in a manner which mathematically resembles the well-known Clauser-Horne-Shimony-Holt (CHSH) inequalities, providing bounds on correlations between quantities measured typically on a bipartite system at two different locations by two different parties \cite{bell-physics1964,clauser-prl1969,Cirel'son-letters1980}. 
Similarly to the ``classical'' LGI bound of 2, previous works have shown that under quantum assumptions, the same correlations are bounded by $2\sqrt{2}$ \cite{budroni-prl2013}.
Here we find more informative LGI bounds, tighter than $2\sqrt{2}$, by studying a more general definition for the correlations, which under certain conditions coincides with the common LGI correlations for quantum systems \cite{fritz-njp2010}. 
We apply a mathematical method that was recently proposed for finding richer bounds for CHSH and other novel inequalities \cite{carmi2018significance,carmi-scienceadvances2019,carmi-njp2019,te2019multiplicative}. 

We show that maximal violations of the classical LGI bound in the newly found inequalities, directly follow from the absence of correlations between an operator and itself at certain times.
On the practical level, our results may allow one to better design quantum temporal correlations, but fundamentally this highlights the necessity of certain non-commutativity between the measured operator and the Hamiltonian (see also \cite{cohen2020praise} for the broader consequences of non-commutativity and uncertainty).

In addition to their theoretical merits, we suggest that these tighter bounds may be beneficial for applied purposes as well, e.g. for devising quantum key distribution protocols (similarly to \cite{shenoy-pra2017}) or analyzing quantum metrological schemes (similarly to \cite{frowis-prl2016}).

For completeness, we show in the Appendix that a different definition of the temporal correlation leads to other bounds.

\section{Materials and Methods}

We define the generalized correlation function $C(X,Y)$ --- for any two Hermitian operators $X$ and $Y$ --- as 

\begin{equation}
    C(X,Y) = \frac{\frac{1}{2}\langle \left \{X, Y\right \} \rangle - \langle X \rangle \langle Y \rangle}{\Delta_{X}\Delta_{Y}},
    \label{eq:generalized_corr}
\end{equation}
where $\{X,Y\}$ denotes the anti-commutator and $\Delta_X = \sqrt{\langle X^2 \rangle - \langle X \rangle^2}$, $\Delta_Y = \sqrt{\langle Y^2 \rangle - \langle Y \rangle^2}$ are the standard deviations.
From the Schr\"odinger Robertson uncertainty relation \cite{robertson-pr1929,schrodinger-1930}, it is easy to prove that this correlation is bounded between -1 to 1.
If $X$ and $Y$ represent projective quantum measurements of values $\pm 1$ with an expected value of 0, this correlation coincides with the symmetric correlation $\frac{1}{2}\langle \left \{X, Y\right \} \rangle$, which Fritz proposed as a quantum analog of the LGI correlation \cite{fritz-njp2010}.

The motivation behind this generalized correlation definition is twofold: except for the aforementioned fact that in the case of projective measurements it generalizes the standard symmetric correlation used for LGI, it is also widely used in quantum optics \cite{adesso-jpa2007,pirandola-pra2009} and can be therefore readily generalized to the case of continuous variables.

After constructing the correlations matrix we will employ semi-positive definiteness conditions, similar to those in \cite{carmi2018significance,carmi-scienceadvances2019,carmi-njp2019,te2019multiplicative}, to derive our Leggett-Garg-like inequalities.

\section{Results}

Here we present four theorems and corresponding proofs entailing analytical bounds for correlations between measurements of a quantum system at different times. 
Theorems \ref{theory:th1}, \ref{theory:th2}, and \ref{theory:th3} discuss elaborate Leggett-Greg-like inequalities describing constraints on generalized correlations of measurements at four consecutive measurement times.
The measurement of a quantity $Q$ at time $t_i$ is described by the Hermitian operator $Q_i$.
Theorem \ref{theory:th1} is significant mainly because it provides a tighter bound than $2\sqrt{2}$, which is the known bound under typical quantum assumptions.
Theorem \ref{theory:th2} resembles the TLM inequality which is a significant bound that was derived independently by Tsirelson, Landau and Masanes \cite{tsirelson-jsm1987,landau-foundationsofphysics1988,masanes-arxiv2003}. The structure of TLM inequality is characterized by the fact it bounds products of correlations and not merely their sums as in other inequalities. 
Theorem \ref{theory:th3} demonstrates a complementarity relation between all pairs of correlations of the four measurements. 

Theorem \ref{theory:th4} relates to a combination of CHSH and LGI \cite{dressel-pra2014,white-npj2016}, where the constraints are on generalized correlations between quantities measured at two different times $t_1$ and $t_2$ by two parties, Alice and Bob, each in a different location.
The two consecutive measurements of Alice (Bob) are represented by the Hermitian operators $A_1$ and $A_2$ ($B_1$ and $B_2$).

\begin{theorem}
{\bf Elaborate Leggett-Garg-like inequality.}  

Given four consecutive measurements, define the generalized LGI parameter as

\begin{equation}
    L = |C(Q_1,Q_2) + C(Q_2,Q_3) + C(Q_3,Q_4) - C(Q_1,Q_4)|.
    \label{eq:L_def}
\end{equation}
The following holds

\begin{equation}
    L \leq 2\sqrt{1+\sqrt{1-\max{\left\{C(Q_1,Q_3)^2,C(Q_2,Q_4)^2\right\}}}}.
    \label{eq:th1_bound}
\end{equation}

\label{theory:th1}
\end{theorem}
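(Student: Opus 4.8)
The plan is to recognize the four generalized correlations as entries of a genuine correlation matrix, and then extract the bound purely from positive semidefiniteness together with the Cauchy--Schwarz inequality. Writing $c_{ij}\myeq C(Q_i,Q_j)$ and $\tilde Q_i \myeq Q_i-\langle Q_i\rangle$, I would first observe that for any real coefficients $a_1,\dots,a_4$ the operator $R=\sum_i a_i \tilde Q_i/\Delta_{Q_i}$ is Hermitian, so $\langle R^2\rangle\geq 0$. Since the commutator part of $\langle \tilde Q_i\tilde Q_j\rangle$ is antisymmetric in $i,j$ while the weight $a_ia_j/(\Delta_{Q_i}\Delta_{Q_j})$ is symmetric, only the anticommutator part survives, giving $\langle R^2\rangle=\sum_{ij}a_ia_j\,c_{ij}\geq 0$ with $c_{ii}=1$. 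Hence the real symmetric matrix $M=(c_{ij})_{i,j=1}^4$ is positive semidefinite with unit diagonal, and a spectral (or Cholesky) factorization lets me write $c_{ij}=\langle v_i,v_j\rangle$ for unit vectors $v_1,\dots,v_4$ in a real inner-product space. This is the multivariate strengthening of the elementary $2\times2$ argument already invoked after Eq.~\eqref{eq:generalized_corr} to bound each $c_{ij}$ in $[-1,1]$.

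With this representation, $L$ becomes a linear functional of inner products that I would regroup in two complementary ways. Collecting the coefficients of $v_2$ and $v_4$ gives $L=|\langle v_2,v_1+v_3\rangle+\langle v_4,v_3-v_1\rangle|$, and Cauchy--Schwarz together with $|v_2|=|v_4|=1$ yields $L\leq |v_1+v_3|+|v_3-v_1|=\sqrt{2+2c_{13}}+\sqrt{2-2c_{13}}$. Symmetrically, collecting the coefficients of $v_1$ and $v_3$ gives $L=|\langle v_1,v_2-v_4\rangle+\langle v_3,v_2+v_4\rangle|\leq |v_2-v_4|+|v_2+v_4|=\sqrt{2-2c_{24}}+\sqrt{2+2c_{24}}$.

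To finish, I would use the identity $\sqrt{2+2c}+\sqrt{2-2c}=2\sqrt{1+\sqrt{1-c^2}}$, obtained by squaring the left-hand side and noting that the cross term equals $2\sqrt{4-4c^2}$. The two regroupings then read $L\leq 2\sqrt{1+\sqrt{1-c_{13}^2}}$ and $L\leq 2\sqrt{1+\sqrt{1-c_{24}^2}}$. Because $s\mapsto 2\sqrt{1+\sqrt{1-s}}$ is monotonically decreasing on $[0,1]$, retaining the smaller of the two bounds replaces $c_{13}^2$ or $c_{24}^2$ by $\max\{c_{13}^2,c_{24}^2\}$, which is exactly Eq.~\eqref{eq:th1_bound}.

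The only genuinely delicate step is the first one: verifying that the generalized correlation matrix is positive semidefinite. The subtlety is that $C(X,Y)$ is built from the symmetrized product $\tfrac12\langle\{X,Y\}\rangle$ rather than from $\langle XY\rangle$, so I must check that it is precisely this symmetrization that makes the real symmetric quadratic form $\sum a_ia_j c_{ij}$ equal to the manifestly nonnegative $\langle R^2\rangle$; the antisymmetric commutator contributions drop out only because the test coefficients are real. Once the correlation-matrix and Gram-vector picture is secured, the remainder is the standard CHSH-type Cauchy--Schwarz estimate, and no optimization over states or operators is required.
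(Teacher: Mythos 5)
Your proof is correct, and it arrives at exactly the paper's two intermediate bounds --- $|C(Q_1,Q_2)+C(Q_2,Q_3)|\leq\sqrt{2\left[1+C(Q_1,Q_3)\right]}$ and $|C(Q_3,Q_4)-C(Q_1,Q_4)|\leq\sqrt{2\left[1-C(Q_1,Q_3)\right]}$, together with their $C(Q_2,Q_4)$ counterparts --- so the overall skeleton (two complementary groupings, triangle inequality, the identity $\sqrt{2+2c}+\sqrt{2-2c}=2\sqrt{1+\sqrt{1-c^2}}$, and monotonicity to pass to the maximum) coincides with the paper's. The execution differs in two worthwhile ways. First, where the paper takes positive semidefiniteness of the correlation matrix as given, citing the construction of Carmi \emph{et al.}, you actually prove it: the observation that $\langle R^2\rangle\geq 0$ for the Hermitian combination $R=\sum_i a_i\tilde Q_i/\Delta_{Q_i}$, with the antisymmetric commutator parts cancelling against the real symmetric weights $a_ia_j$, is precisely the justification the paper outsources, and it makes your argument self-contained; you are right that this is the one delicate point, and you handle it correctly. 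Second, you exploit positivity through a Gram factorization of the full $4\times 4$ matrix ($c_{ij}=\langle v_i,v_j\rangle$ with unit vectors $v_i$) followed by Cauchy--Schwarz, whereas the paper works with two $3\times 3$ submatrices, applies the Schur complement condition, and evaluates the resulting $2\times 2$ form at $v_j^T=((-1)^j,1)$. The two mechanisms are mathematically equivalent --- your step $|\langle v_2,v_1+v_3\rangle|\leq\|v_1+v_3\|$ \emph{is} the paper's Schur-complement inequality in disguise --- but each buys something: the Gram picture makes the geometry transparent (the bound is the sum of the lengths of the two diagonals of the rhombus spanned by $v_1$ and $v_3$), while the paper's Schur-complement formulation is the one that carries over directly to Theorem \ref{theory:th2}, where the TLM-type product bound is read off from the determinant of that same Schur complement and would not follow as immediately from your vector picture.
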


\begin{proof} [Proof of Theorem 1]

Let $C$ be the following correlation matrix 

\begin{gather}
C = 
\begin{pmatrix}
C(Q_{2i},Q_{2i}) & C(Q_{2i},Q_{1}) & C(Q_{2i},Q_{3})\\
C(Q_{2i},Q_{1}) & C(Q_{1},Q_{1}) & C(Q_{1},Q_{3})\\
C(Q_{2i},Q_{3}) & C(Q_{1},Q_{3}) & C(Q_{3},Q_{3})
\end{pmatrix}.
\end{gather}
where $i = 1,2$. $C$ is a positive semi-definite matrix, i.e., $C\succeq 0$ (see \cite{carmi-njp2019,carmi-scienceadvances2019} for more details regarding the construction and properties of such matrices). 
Therefore, by the Schur complement condition for positive semi-definiteness, 

\begin{gather}
\begin{pmatrix}
1 & C(Q_{1},Q_{3})\\
C(Q_{1},Q_{3}) & 1
\end{pmatrix}
\succeq 
\begin{pmatrix}
C(Q_{2i},Q_{1})\\
C(Q_{2i},Q_{3})
\end{pmatrix}
\begin{pmatrix}
C(Q_{2i},Q_{1}) & C(Q_{2i},Q_{3})
\end{pmatrix}.
\label{eq:shur_bound}
\end{gather}
Let $v_j^T = ((-1)^j, 1)$.
Multiplying by $v_j^T$ from the left and $v_j$ from the right, the above inequality implies

\begin{equation}
    2[1+(-1)^j C(Q_1,Q_3)]\geq [ C(Q_{2i},Q_3)+(-1)^j C(Q_{2i},Q_1) ]^2.
\end{equation}
For $j = i-1$,

\begin{align}
    \begin{split}
    &[C(Q_2,Q_1)+C(Q_2,Q_3)]^2\leq 2 \left[1+C(Q_1,Q_3) \right] \\
    &[C(Q_4,Q_3)-C(Q_4,Q_1)]^2\leq 2 \left[1-C(Q_1,Q_3) \right].
    \end{split}
    \label{eq:two_bounds_13}
\end{align}
Since $C(Q_i,Q_j) = C(Q_j,Q_i)$ for any $i$ and $j$, then

\begin{align}
    \begin{split}
    &|C(Q_1,Q_2)+C(Q_2,Q_3)|\leq \sqrt{2 \left[1+C(Q_1,Q_3) \right]} \\
    &|C(Q_3,Q_4)-C(Q_1,Q_4)|\leq \sqrt{2 \left[1-C(Q_1,Q_3) \right]}.
    \end{split}
    \label{eq:two_bounds_13_final}
\end{align}
Using the triangle inequality on the two expressions in the l.h.s of Eq. \eqref{eq:two_bounds_13_final}, we derive the following 

\begin{equation}
    L\leq 2\sqrt{1+\sqrt{1-C(Q_1,Q_3)^2}}.
    \label{eq:one_final_bound_13}
\end{equation}

By repeating the analytical derivation above for the following correlation matrix 

\begin{gather}
\tilde{C} = 
\begin{pmatrix}
C(Q_{2i-1},Q_{2i-1}) & C(Q_{2i-1},Q_{4}) & C(Q_{2i-1},Q_{2})\\
C(Q_{2i-1},Q_{4}) & C(Q_{4},Q_{4}) & C(Q_{4},Q_{2})\\
C(Q_{2i-1},Q_{2}) & C(Q_{4},Q_{2}) & C(Q_{2},Q_{2})
\end{pmatrix},
\end{gather}
the multiplications by $v_j^T$ and $v_j$ gives rise to:
 
\begin{equation}
    2[1+(-1)^j C(Q_4,Q_2)]\geq [ C(Q_{2i-1},Q_2)+(-1)^j C(Q_{2i-1},Q_4) ]^2.
    \label{eq:one_bound_24}
\end{equation}
From Eq. \eqref{eq:one_bound_24}, for the cases $i = j = 1$ and $i = j = 2$, the following inequality is derived using the triangle inequality

\begin{equation}
    L\leq 2\sqrt{1+\sqrt{1-C(Q_2,Q_4)^2}}.
    \label{eq:one_final_bound_24}
\end{equation}
Finally, Theorem \ref{theory:th1} follows from Eqs. \eqref{eq:one_final_bound_13} and \eqref{eq:one_final_bound_24}. 
Thus, we prove that given our assumptions, the generalized LGI parameter has a bound which is tighter than $2\sqrt{2}$, and $2\sqrt{2}$ can be reached only if the correlations $C(Q_1,Q_3)$ and $C(Q_2,Q_4)$ are equal to $0$. 
This result generalizes the known bound of $2\sqrt{2}$ for the LGI with projective measurements of values $\pm 1$ \cite{budroni-prl2013}. 
\end{proof}

\begin{theorem}
{\bf Leggett-Garg-like inequality in the TLM form.}

Given four consecutive measurements, 

\begin{align}
    \begin{split}
    &|C(Q_2,Q_1) C(Q_2,Q_3) - C(Q_4,Q_1) C(Q_4,Q_3)| \leq \\
    &\sqrt{(1-C(Q_2,Q_1)^2)(1-C(Q_2,Q_3)^2)}+ 
    \sqrt{(1-C(Q_4,Q_1)^2)(1-C(Q_4,Q_3)^2)}.
    \end{split}
    \label{eq:tlm}
\end{align}

\label{theory:th2}
\end{theorem}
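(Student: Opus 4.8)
The plan is to reuse the $3\times 3$ correlation matrix $C$ from the proof of Theorem~\ref{theory:th1}, but rather than forming Schur complements and contracting with the vectors $v_j$, I would read the inequality directly off the non-negativity of its determinant. The construction already guarantees $C \succeq 0$, and a positive semi-definite matrix has all of its principal minors non-negative; in particular $\det C \geq 0$.

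First I would evaluate this determinant for $i=1$, where the matrix has unit diagonal and off-diagonal entries $C(Q_2,Q_1)$, $C(Q_2,Q_3)$, $C(Q_1,Q_3)$. Writing $p = C(Q_2,Q_1)$, $q = C(Q_2,Q_3)$ and $x = C(Q_1,Q_3)$ for brevity, the condition $\det C \geq 0$ reads
\begin{equation}
    1 + 2pqx - p^2 - q^2 - x^2 \geq 0.
\end{equation}
Reading this as a quadratic inequality in the shared correlation $x$ and factoring the discriminant as $p^2q^2 - p^2 - q^2 + 1 = (1-p^2)(1-q^2)$, I obtain the two-sided bound $pq - \sqrt{(1-p^2)(1-q^2)} \leq x \leq pq + \sqrt{(1-p^2)(1-q^2)}$.

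Next I would repeat the identical computation for $i=2$, i.e.\ for the same matrix with $Q_4$ in place of $Q_{2i}=Q_2$. Setting $r = C(Q_4,Q_1)$ and $s = C(Q_4,Q_3)$ yields $rs - \sqrt{(1-r^2)(1-s^2)} \leq x \leq rs + \sqrt{(1-r^2)(1-s^2)}$. The step that does all the work is the observation that the quantity $x = C(Q_1,Q_3)$ appearing in both bounds is literally the same number, since both instances of $C$ share that off-diagonal entry. Hence the two closed intervals for $x$ must overlap, and two intervals $[\ell_1,u_1]$, $[\ell_2,u_2]$ intersect exactly when $\ell_1 \leq u_2$ and $\ell_2 \leq u_1$. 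Writing these two conditions out and combining them collapses precisely to $|pq - rs| \leq \sqrt{(1-p^2)(1-q^2)} + \sqrt{(1-r^2)(1-s^2)}$, which is Eq.~\eqref{eq:tlm}.

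I expect the only delicate point to be the common-$x$ observation together with the discriminant factorization; once those are in place the interval-overlap argument is routine. It is worth noting in passing that each radicand $(1-p^2)(1-q^2)$, $(1-r^2)(1-s^2)$ is automatically non-negative because every generalized correlation lies in $[-1,1]$, so all square roots are real and the intervals are well defined.
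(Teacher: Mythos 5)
Your proof is correct and is essentially the paper's own argument in different clothing: by the Schur determinant formula, the non-negativity of your $3\times 3$ determinant $1+2pqx-p^2-q^2-x^2$ is literally the same scalar inequality the paper extracts from the $2\times 2$ Schur-complement matrix in Eq.~\eqref{eq:shur_bound_t2}, and both reduce to $|x-pq|\le\sqrt{(1-p^2)(1-q^2)}$ together with its $(r,s)$ analogue. Your interval-overlap step is then exactly the paper's triangle-inequality step restated (two intervals sharing the point $x=C(Q_1,Q_3)$ is the same fact as $|pq-rs|\le|x-pq|+|x-rs|$), so the two proofs coincide in substance.
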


\begin{proof} [Proof of Theorem 2]

Eq. \eqref{eq:shur_bound} implies

\begin{gather}
\begin{pmatrix}
1 - C(Q_{2i},Q_{1})^2 & C(Q_{1},Q_{3}) - C(Q_{2i},Q_{1}) C(Q_{2i},Q_{3})\\
C(Q_{1},Q_{3}) - C(Q_{2i},Q_{3}) C(Q_{2i},Q_{1}) & 1 - C(Q_{2i},Q_{3})^2
\end{pmatrix}
\succeq 
0.
\label{eq:shur_bound_t2}
\end{gather}
The determinant of the above matrix is non-negative, and thus,

\begin{equation}
     |C(Q_{1},Q_{3}) - C(Q_{2i},Q_{1}) C(Q_{2i},Q_{3})|\leq
     \sqrt{(1 - C(Q_{2i},Q_{1})^2)(1 - C(Q_{2i},Q_{3})^2)}.
\end{equation}
For the cases $i = 1$ and $i = 2$, we obtain the following inequalities, respectively,

\begin{align}
    \begin{split}
    &|C(Q_{1},Q_{3}) - C(Q_{2},Q_{1}) C(Q_{2},Q_{3})|\leq
     \sqrt{(1 - C(Q_{2},Q_{1})^2)(1 - C(Q_{2},Q_{3})^2)} \\
    &|C(Q_{1},Q_{3}) - C(Q_{4},Q_{1}) C(Q_{4},Q_{3})|\leq
     \sqrt{(1 - C(Q_{4},Q_{1})^2)(1 - C(Q_{4},Q_{3})^2)}.
    \end{split}
    \label{eq:two_bounds_t2}
\end{align}
Finally, Theorem \ref{theory:th2} is derived from the triangle inequality and Eq. \eqref{eq:two_bounds_t2}.
The resulting inequality presents the TLM criterion for correlations between measurements at different times.
\end{proof}

\begin{theorem}
{\bf Leggett-Garg-like inequality in the form of a complementarity relation.}

Given four consecutive measurements,
    \begin{equation}
    \left(\frac{L}{2\sqrt{2}}\right)^2+\left(\frac{C(Q_1,Q_3)}{2\sqrt{2}}\right)^2+\left(\frac{C(Q_2,Q_4)}{2\sqrt{2}}\right)^2 \leq 1.
    \label{eq:th3}
    \end{equation}
\label{theory:th3}
\end{theorem}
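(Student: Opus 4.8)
The plan is to reduce the complementarity relation entirely to the bound already established in Theorem \ref{theory:th1}. Multiplying Eq. \eqref{eq:th3} through by $8$, the claim becomes the equivalent and cleaner statement
\begin{equation*}
L^2 + C(Q_1,Q_3)^2 + C(Q_2,Q_4)^2 \leq 8,
\end{equation*}
so the entire problem collapses to a single one-variable estimate once the $L^2$ term is controlled.

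First I would introduce the shorthand $m = \max\{C(Q_1,Q_3)^2, C(Q_2,Q_4)^2\}$, which is precisely the quantity appearing under the inner square root in Eq. \eqref{eq:th1_bound}. Squaring the bound of Theorem \ref{theory:th1} then gives $L^2 \leq 4\left(1 + \sqrt{1-m}\right)$. Since $m$ is the larger of the two squared correlations, both $C(Q_1,Q_3)^2 \leq m$ and $C(Q_2,Q_4)^2 \leq m$ hold, so their sum is at most $2m$. Combining these observations yields
\begin{equation*}
L^2 + C(Q_1,Q_3)^2 + C(Q_2,Q_4)^2 \leq 4\left(1 + \sqrt{1-m}\right) + 2m.
\end{equation*}

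It then remains to show that the right-hand side, regarded as a function of $m \in [0,1]$, never exceeds $8$. This is the only genuinely analytic step and I expect it to be the crux of the argument, although an elementary one. Subtracting $4$ and halving, the inequality $4\left(1+\sqrt{1-m}\right) + 2m \leq 8$ is equivalent to $2\sqrt{1-m} \leq 2 - m$. Both sides are nonnegative for $m \in [0,1]$, so squaring is reversible and produces $4(1-m) \leq 4 - 4m + m^2$, i.e. $0 \leq m^2$, which always holds. This establishes the displayed bound and hence Eq. \eqref{eq:th3}.

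Finally, for interpretation I would observe that equality in the squared step forces $m = 0$, recovering the remark from Theorem \ref{theory:th1} that the value $L = 2\sqrt{2}$ is attainable exactly when $C(Q_1,Q_3) = C(Q_2,Q_4) = 0$; the complementarity relation Eq. \eqref{eq:th3} makes this trade-off quantitative, showing that any nonzero $C(Q_1,Q_3)$ or $C(Q_2,Q_4)$ must be compensated by a corresponding reduction in the attainable $L$.
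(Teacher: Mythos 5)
Your proof is correct, and it shares its analytic kernel with the paper's: the one-variable inequality you verify by squaring, $2\sqrt{1-m}\le 2-m$, is exactly the lemma $\sqrt{1-a}\le 1-a/2$ for $a\in[0,1]$ that the paper invokes. The bookkeeping differs, however. The paper never introduces the maximum $m$: it reaches back into the proof of Theorem \ref{theory:th1} for the two separate intermediate bounds, Eqs. \eqref{eq:one_final_bound_13} and \eqref{eq:one_final_bound_24}, linearizes each one to get $L^2+2C(Q_1,Q_3)^2\le 8$ and $L^2+2C(Q_2,Q_4)^2\le 8$ (Eqs. \eqref{eq:t3_13} and \eqref{eq:t3_42}), and sums the two to obtain Eq. \eqref{eq:th3}. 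You instead treat the final statement \eqref{eq:th1_bound} of Theorem \ref{theory:th1} as a black box, bound $C(Q_1,Q_3)^2+C(Q_2,Q_4)^2\le 2m$, and check the resulting single-variable estimate $4(1+\sqrt{1-m})+2m\le 8$. Your version buys modularity --- it uses only the theorem as published, not equations internal to its proof --- and the seemingly lossy step of replacing the sum by $2m$ is exactly compensated, because the maximum is precisely the quantity that controls $L$; indeed your chain also yields $L^2+2m\le 8$, which is equivalent to the conjunction of the paper's two intermediate inequalities, so the two arguments carry the same information and differ only in organization.
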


\begin{proof} [Proof of Theorem 3]

From Eq. \eqref{eq:one_final_bound_13},

\begin{equation}
    L^2\leq 4\left( 1 + \sqrt{1-C(Q_1,Q_3)^2} \right).
\end{equation}
Since $\sqrt{1-a} \leq 1 - a/2$ for $a\in [0,1]$, then

\begin{equation}
L^2 + 2C(Q_1,Q_3)^2 \leq 8
\label{eq:t3_13}
\end{equation}
and similarly we can derive from Eq. \eqref{eq:one_final_bound_24},

\begin{equation}
L^2 + 2C(Q_2,Q_4)^2 \leq 8
\label{eq:t3_42}
\end{equation}
Theorem \ref{theory:th3} is derived after summing Eqs. \eqref{eq:t3_13} and \eqref{eq:t3_42}.
This inequality demonstrates a complementarity relation between all six correlations of pairs from the four measurements.
\end{proof}

\begin{theorem}
{\bf Elaborate Bell-Leggett-Garg-like inequality.}

For two consecutive measurements $A_1$ and $A_2$ of Alice, and two consecutive measurements $B_1$ and $B_2$ of Bob, define the generalized Bell-Leggett-Garg inequality parameter as 

\begin{equation}
    BLG = |C(A_1,A_2)+C(A_1,B_2)+C(B_1,B_2)-C(B_1,A_2)|.
\end{equation}
The following holds

\begin{equation}
    BLG \leq 2\sqrt{1+\sqrt{1-\max{\left\{C(A_1,B_1)^2,C(A_2,B_2)^2\right\}}}}.
\end{equation}

\label{theory:th4}
\end{theorem}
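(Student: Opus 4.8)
The plan is to recognize that $BLG$ is, up to a relabeling of operators, structurally identical to the generalized LGI parameter $L$ of Theorem \ref{theory:th1}, so that the entire correlation-matrix argument carries over. Concretely, under the substitution $Q_1\mapsto A_2$, $Q_2\mapsto A_1$, $Q_3\mapsto B_2$, $Q_4\mapsto B_1$, the four-term chain defining $L$ maps term by term onto the chain defining $BLG$: using only the symmetry $C(X,Y)=C(Y,X)$, one has $C(Q_1,Q_2)\mapsto C(A_1,A_2)$, $C(Q_2,Q_3)\mapsto C(A_1,B_2)$, $C(Q_3,Q_4)\mapsto C(B_1,B_2)$, and $-C(Q_1,Q_4)\mapsto -C(B_1,A_2)$. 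The two ``bounding'' correlations likewise map as $C(Q_1,Q_3)\mapsto C(A_2,B_2)$ and $C(Q_2,Q_4)\mapsto C(A_1,B_1)$, which are precisely the two quantities appearing in the $\max$ of the target bound. I would therefore state this correspondence explicitly at the outset, so that the rest of the proof is a transcription rather than a fresh derivation.

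Having fixed the relabeling, I would form the two $3\times3$ correlation matrices for the operators $\{A_1,A_2,B_1,B_2\}$ that are the images of $C$ and $\tilde C$ under the substitution, assert their positive semi-definiteness exactly as in Theorem \ref{theory:th1}, apply the Schur complement against the $2\times2$ block to reproduce Eq. \eqref{eq:shur_bound}, and contract with $v_j^T=((-1)^j,1)$. This yields the two scalar bounds analogous to Eq. \eqref{eq:two_bounds_13_final}: $|C(A_1,A_2)+C(A_1,B_2)|\le\sqrt{2[1+C(A_2,B_2)]}$ together with $|C(B_1,B_2)-C(B_1,A_2)|\le\sqrt{2[1-C(A_2,B_2)]}$, and an analogous pair controlled instead by $C(A_1,B_1)$. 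A single application of the triangle inequality, exactly as in the passage from Eq. \eqref{eq:two_bounds_13_final} to Eq. \eqref{eq:one_final_bound_13}, then gives $BLG\le 2\sqrt{1+\sqrt{1-C(A_2,B_2)^2}}$ and $BLG\le 2\sqrt{1+\sqrt{1-C(A_1,B_1)^2}}$; retaining the tighter of the two, i.e.\ the one with the larger squared correlation, produces the claimed $\max$.

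The one step that genuinely requires justification, rather than mechanical relabeling, is the positive semi-definiteness of the correlation matrices in this mixed spatio-temporal setting. In Theorem \ref{theory:th1} the four operators all act on a single system measured at successive times, whereas here $A_1,A_2$ and $B_1,B_2$ refer to two spatially separated parties. I expect this to be the main point to address, but I would argue that it causes no real difficulty: the generalized correlation $C(X,Y)$ of Eq. \eqref{eq:generalized_corr} is defined for \emph{any} pair of Hermitian operators on the joint Hilbert space $\mathcal{H}_A\otimes\mathcal{H}_B$, and the Gram-type construction of \cite{carmi-njp2019,carmi-scienceadvances2019} guaranteeing $C\succeq 0$ depends only on this Hermiticity, not on whether the operators are separated in time or in space. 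Since the semi-definiteness property is the sole physical input to the argument, it transfers intact, and the remainder of the proof is identical to that of Theorem \ref{theory:th1}.
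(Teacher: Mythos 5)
Your proposal is correct and follows essentially the same route as the paper's own proof: the paper likewise constructs the $3\times3$ correlation matrices over $(X,A_2,B_2)$ for $X\in\{A_1,B_1\}$ (and their counterparts with $A_1,B_1$ in place of $A_2,B_2$), applies the Schur complement and the contraction with $v_j^T=((-1)^j,1)$, and finishes with the triangle inequality, explicitly invoking the analysis of Theorem \ref{theory:th1}. Your explicit relabeling $Q_1\mapsto A_2$, $Q_2\mapsto A_1$, $Q_3\mapsto B_2$, $Q_4\mapsto B_1$ and your remark that positive semi-definiteness depends only on Hermiticity on the joint Hilbert space are accurate refinements of what the paper leaves implicit.
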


\begin{proof} [Proof of Theorem 4]

Let $C_X$ be the following correlation matrix
\begin{gather}
C_X = 
\begin{pmatrix}
C(X,X) & C(X,A_2) & C(X,B_2)\\
C(X,A_2) & C(A_2,A_2) & C(A_2,B_2)\\
C(X,B_2) & C(A_2,B_2) & C(B_2,B_2)
\end{pmatrix},
\end{gather}
for $X \in \{A_1,B_1\}$.
Following the analysis in the proof of Theorem \ref{theory:th1}, i.e.,  using the Schur complement condition for positive semi-definiteness and after multiplying by $v_j^T = ((-1)^j, 1)$ and $v_j$, we obtain

\begin{equation}
    | C(X,B_2)+(-1)^j C(X,A_2) | \leq
    \sqrt{2[1+(-1)^j C(A_2,B_2)]}.
\end{equation}
For the two cases, $(X = A_1~\&~ j = 0)$ and $(X = B_1~\&~ j = 1)$, we obtain the following inequalities, respectively,

\begin{align}
    \begin{split}
    &| C(A_1,B_2)+ C(A_1,A_2) | \leq
    \sqrt{2[1+ C(A_2,B_2)]} \\
    &| C(B_1,B_2) - C(B_1,A_2) | \leq
    \sqrt{2[1 - C(A_2,B_2)]}.
    \end{split}
    \label{eq:two_bounds_t4}
\end{align}
From the triangle inequality,

\begin{equation}
    BLG\leq 2\sqrt{1+\sqrt{1-C(A_2,B_2)^2}}.
    \label{eq:t4_one_final_bound_2}
\end{equation}

Similarly, by replacing the variables $A_2$ and $B_2$ by $A_1$ and $B_1$, respectively,

\begin{equation}
    BLG\leq 2\sqrt{1+\sqrt{1-C(A_1,B_1)^2}}.
    \label{eq:t4_one_final_bound_1}
\end{equation}
Finally, Theorem \ref{theory:th4} is derived from Eqs. \eqref{eq:t4_one_final_bound_2} and \eqref{eq:t4_one_final_bound_1}.
\end{proof}

\subsection*{An example of a system that upholds our new bounds}

Here we demonstrate Theorems \ref{theory:th1},\ref{theory:th2} and \ref{theory:th3} for a specific spin model \cite{emary-iop2013,halliwell-pra2016}, which is defined by the following Hamiltonian and observable

\begin{align}
    \begin{split}
    &H = \frac{\hbar \omega}{2}\sigma_x = \frac{\hbar \omega}{2}
        \begin{pmatrix}
        0 & 1 \\
        1 & 0 
        \end{pmatrix}\\
    &Q = \sigma_z = 
        \begin{pmatrix}
        1 & 0 \\
        0 & -1 
        \end{pmatrix}.
    \end{split}
    \label{eq:example_system}
\end{align}
The time evolution of $Q$ is

\begin{gather}
Q_t = 
\begin{pmatrix}
\cos(\omega t) & -i \sin(\omega t) \\
i \sin(\omega t) & -\cos(\omega t) 
\end{pmatrix},
\end{gather}
and therefore, according to Eq. \eqref{eq:generalized_corr}, the generalized correlation between two measurements of $Q$ at times $t$ and $s$ is

\begin{equation}
    C(Q_t,Q_s) = \cos{(\omega (t-s))}.
\end{equation}

To demonstrate Theorem \ref{theory:th1}, for any four consecutive measurement times, define $D1$ as the difference between our bound (the r.h.s of Eq. \eqref{eq:th1_bound}) and the LGI parameter (Eq. \eqref{eq:L_def}). Thus, in our system,

\begin{align}
    \begin{split}
    D1 = &~2\sqrt{1+\sqrt{1-\max{\left\{\cos^2{(\omega (t_1-t_3))},\cos^2{(\omega (t_4-t_2))}\right\}}}}
    \\ & - 
    |\cos{(\omega (t_1-t_2))}+\cos{(\omega (t_2-t_3))}+\cos{(\omega (t_3-t_4))}-\cos{(\omega (t_1-t_4))}|.
    \end{split}
\end{align}
Similarly, to demonstrate Theorem \ref{theory:th2}, define $D2$ as the difference between the r.h.s and the l.h.s of Eq.\eqref{eq:tlm}, and thus,
\begin{align}
    \begin{split}
    D2 = &~|\sin(\omega (t_2-t_1))\sin(\omega (t_2-t_3))|+|\sin(\omega (t_4-t_1))\sin(\omega (t_4-t_3))|
    \\ & - 
    |\cos(\omega (t_2-t_1))\cos(\omega (t_2-t_3))-
    \cos(\omega (t_4-t_1))\cos(\omega (t_4-t_3))|.
    \end{split}
\end{align}
It can be shown numerically that $D1,D2 \geq 0$ for any $t_1, t_2, t_3$ and $t_4$ (see  Figs.~\ref{fig:Th1_Th2_Th3}(a)-(b) for a certain range of parameters).
An example of a measurement time series in which the bound in Theorem \ref{theory:th1} and the LGI parameter are both equal to $2\sqrt{2}$ is $t_1 = 0, t_2 = \pi/4, t_3 = \pi/2$ and $t_4 = 3\pi/4$.

In Fig.~\ref{fig:Th1_Th2_Th3}(c), we demonstrate Theorem \ref{theory:th3} by showing that the l.h.s of Eq. \eqref{eq:th3} is indeed smaller or equal to 1. To do so, we display all possible data points using the axes $L/2\sqrt{2}$, $C(Q_1,Q_3)/2\sqrt{2}$ and $C(Q_2,Q_4)/2\sqrt{2}$, and note that they all reside within the unit sphere. 

\begin{figure*}
	\centering
	\subfloat[Theorem \ref{theory:th1} for our example.]{\includegraphics[width=0.33\linewidth]{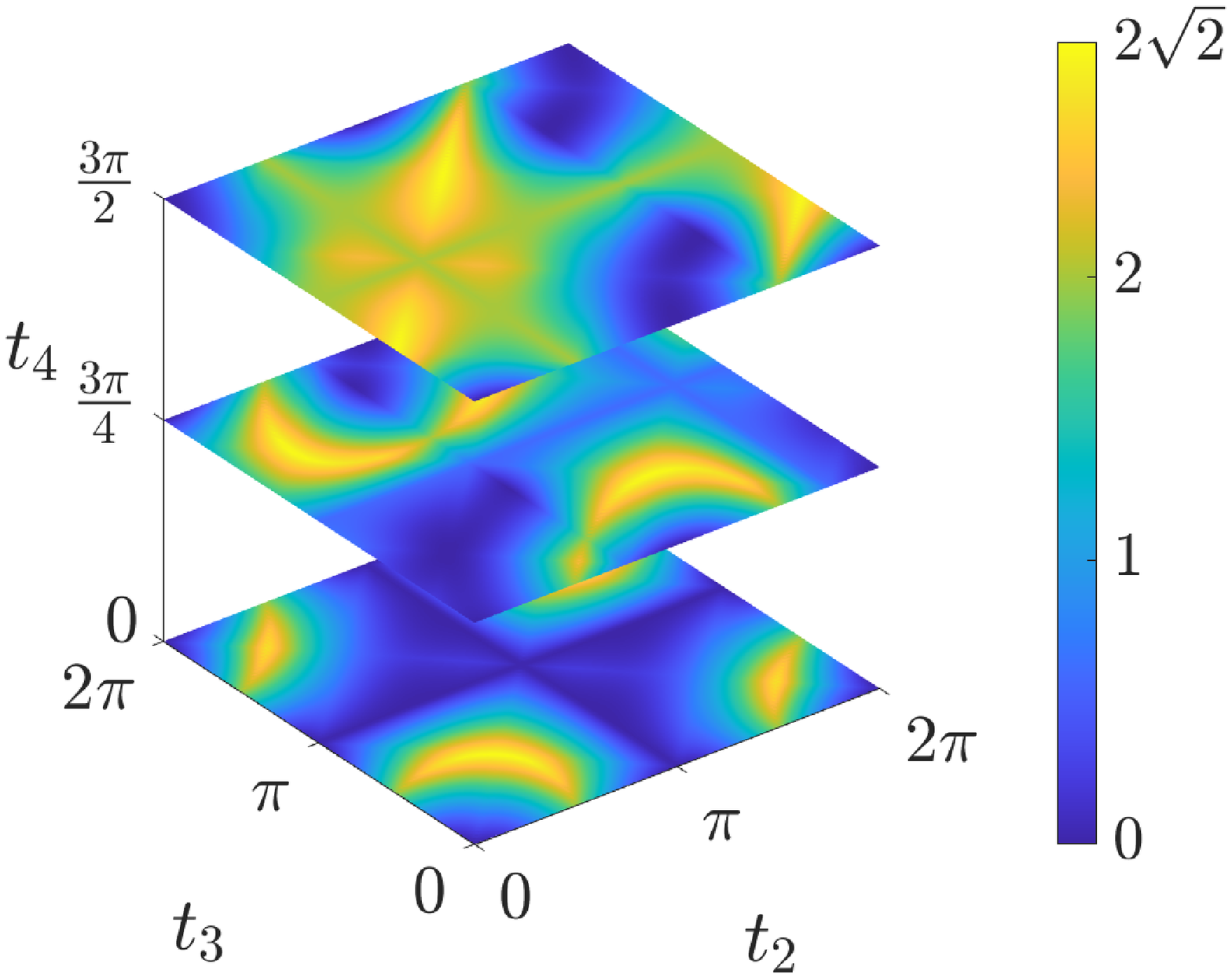}}
            \hfil
	\subfloat[Theorem \ref{theory:th2} for our example.]{\includegraphics[width=0.33\linewidth]{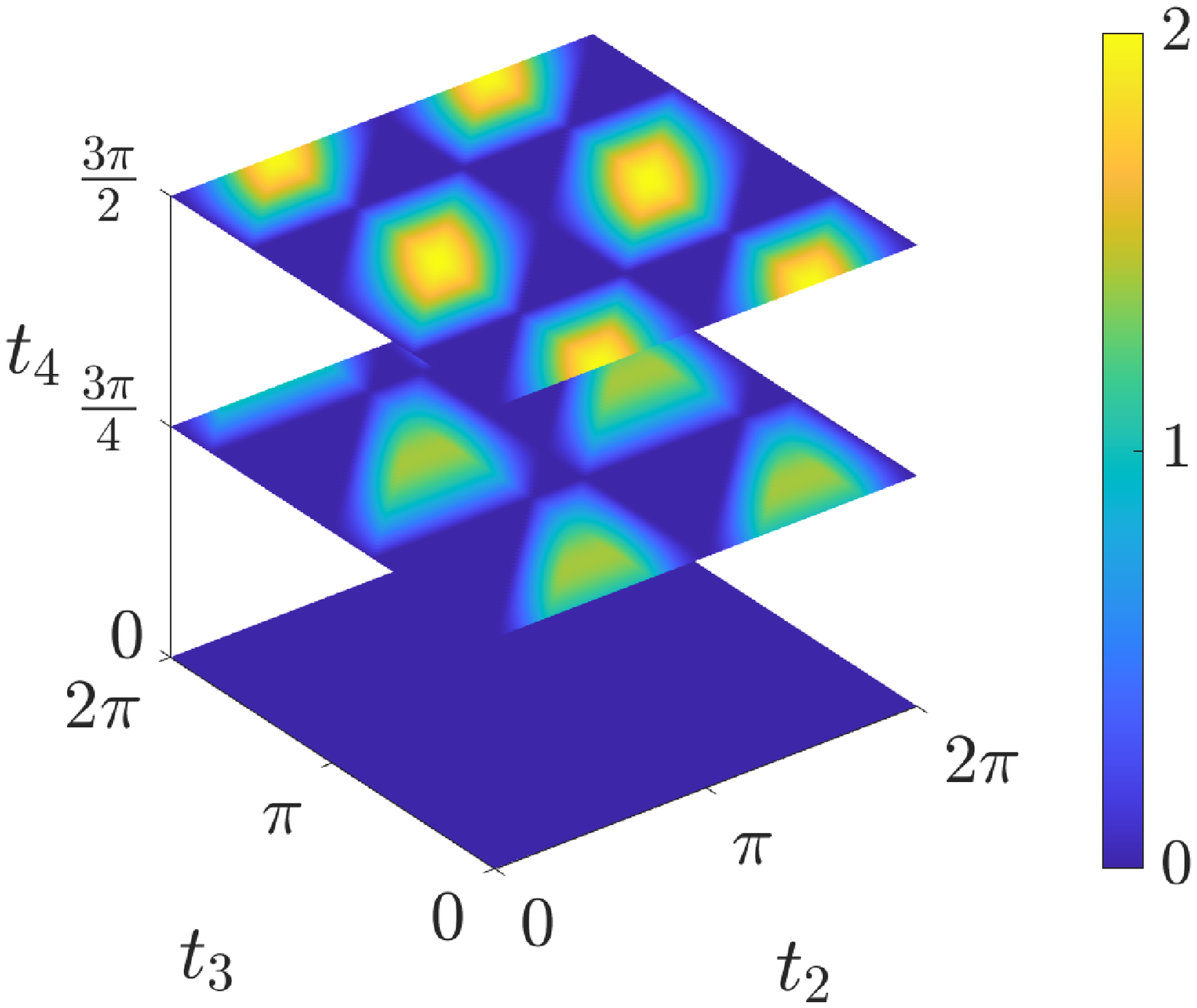}}
            \hfil
	\subfloat[Theorem \ref{theory:th3} for our example.]{\includegraphics[width=0.33\linewidth]{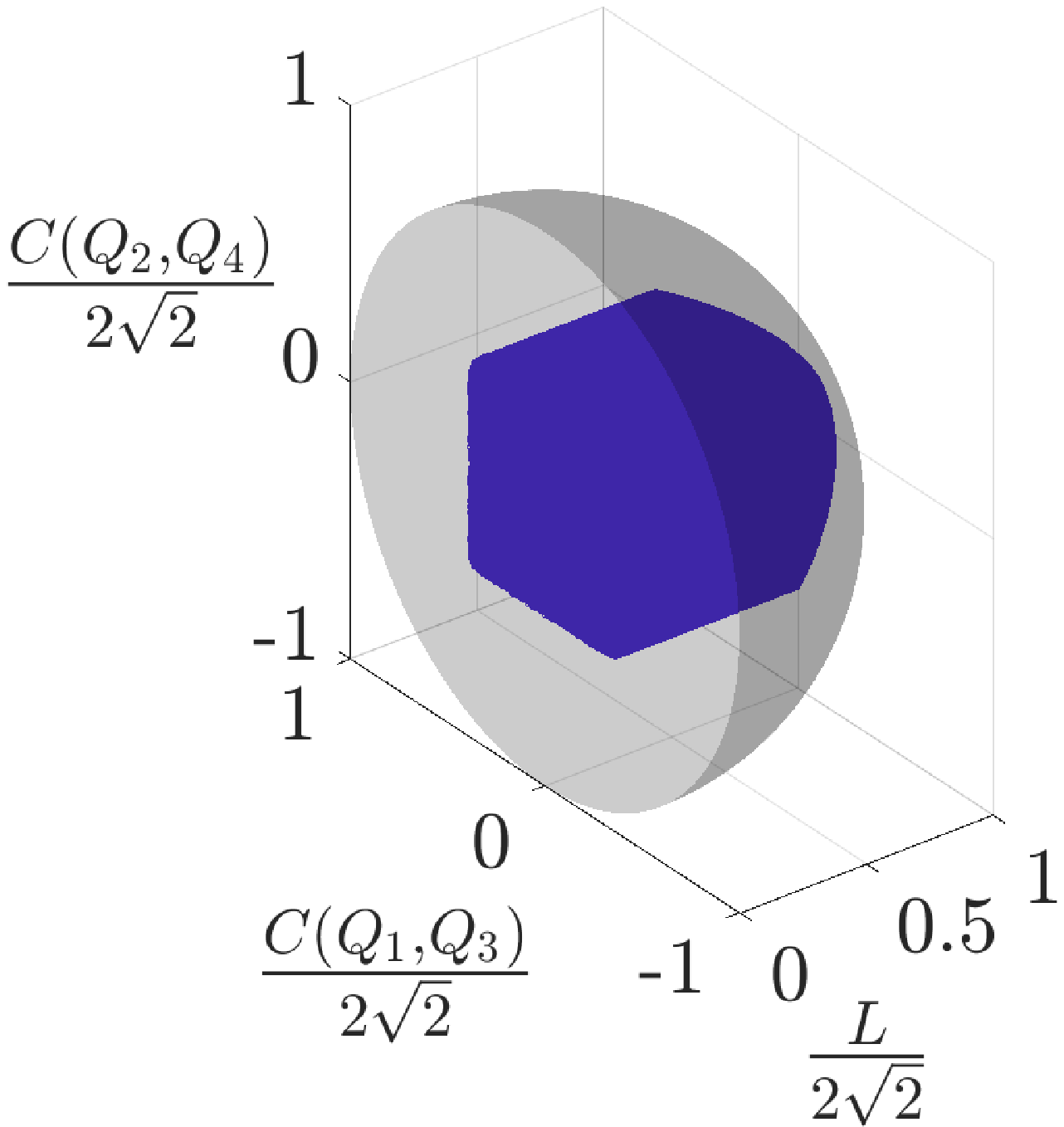}}
        \caption{\textbf{Demonstration of Theorems \ref{theory:th1}, \ref{theory:th2} and \ref{theory:th3} for our spin model}.
        \textbf{(a) - (b)}, the values of $D1$ and $D2$ respectively, both of which have a minimal value of  0, indicating Eqs. \eqref{eq:th1_bound} and \eqref{eq:tlm} are upheld.
        \textbf{(c)} The indigo area represents all data points, while the grey area represents the bounds of half of the unit sphere. All indigo points are within the gray area, indicating Eq. \eqref{eq:th3} is upheld.
        }
        \label{fig:Th1_Th2_Th3}
\end{figure*}

\section{Discussion}

In this manuscript, we have studied generalizations of the LGI. We have utilized a framework based on the quantum correlation matrix \cite{carmi2018significance,carmi-scienceadvances2019,carmi-njp2019} which yielded now a more detailed version of the LGI by incorporating additional correlations compared to the standard case. The results emphasize the major role of the positive-semidefinite correlation matrix in quantum mechanics, not only in ``spatial'' scenarios but also in ``temporal'' ones. They also demonstrate a type of complementarity --- in order for certain correlations to achieve their maximal values (left hand side of Eq. \eqref{eq:th1_bound}) others must vanish (those at the right hand side of Eq. \eqref{eq:th1_bound}). 

We suggest that apart from its foundational and theoretical merits, the proposed bound may help in designing temporal correlations as well as the dynamics giving rise to them. They may also have practical implications in the area of quantum cryptography, as a possible generalization to existing encryption protocols that are based on LGI \cite{shenoy-pra2017} or in the field of quantum metrology, possibly assisting approaches such as \cite{frowis-prl2016}, or in LGI-based quantum computation assessment \cite{vitale-scientificreports2019,santini-pra2022}.
In addition, since there are known connections between correlation matrices and both classical \cite{carmi-entropy2018} and quantum \cite{liu-physicsa2020} Fisher information, the bounds derived in this manuscript can provide analogous bounds on elements of the Fisher information matrix.

Finally, the work presented in this manuscript provides concrete and measurable predictions and therefore it can be verified in direct experiments. The system we provide as an example in the Results section is equivalent to the system which was experimentally measured in \cite{athalye-prl2011}, and the experimental results there are consistent with our bounds. 
In the Appendix, we propose an additional definition for the correlations and derive appropriate LGI-like bounds, which were not directly measured in previous experiments but can be either calculated or measured, e.g. via weak measurements \cite{aharonov-prl1988}.

\section*{Acknowledgements}
E.C. was supported by the Israeli Innovation Authority under Project 73795 and the Eureka program, by Elta Systems Ltd., by the Pazy Foundation, by the Israeli Ministry of Science and Technology, and by the Quantum Science and Technology Program of the Israeli Council of Higher Education.
The authors acknowledge many helpful discussions with Avishy Carmi and Rain Lenny.

\section*{Appendix}

In this section, we discuss two subjects --- another correlation definition and the significance of the underlying assumptions.

In principle, one can generalize the above results to the case of non-Hermitian quantum mechanics \cite{moiseyev2011non}. By defining the correlation $C(Q_i,Q_j)$ as the following complex correlation coefficient \cite{carmi-njp2019,carmi-scienceadvances2019}, 

\begin{equation}
    C(Q_i,Q_j) = \frac{\langle Q_i Q_j^\dagger \rangle - \langle Q_i \rangle \langle Q_j \rangle^\dagger} {\Delta_{Q_i}\Delta_{Q_j}},
\end{equation}
and the complex-valued LGI parameter as 

\begin{equation}
    L = |C(Q_1,Q_2) + C(Q_2,Q_3) + C(Q_3,Q_4) - C(Q_1,Q_4)|,
\end{equation}
we obtain 

\begin{align}
    \begin{split}
    & L \leq 2\sqrt{1+\sqrt{1-\max{\left\{\Re^2[C(Q_1,Q_3)],\Re^2[C(Q_4,Q_2)]\right\}}}} \\
    & \left(\frac{L}{2\sqrt{2}}\right)^2 + \left(\frac{\Re[C(Q_1,Q_3)]}{2\sqrt{2}}\right)^2  + 
    \left(\frac{\Re[C(Q_4,Q_2)]}{2\sqrt{2}}\right)^2 
    \leq 1,
    \end{split}
\end{align}
if $C(Q_2,Q_3) = C(Q_3,Q_2)$ or $[C(Q_1,Q_2) = C(Q_2,Q_1) ~\&~ C(Q_3,Q_4) = C(Q_4,Q_3)]$. 
The proofs of these bounds are similar to the proofs of Theorem \ref{theory:th1} and Theorem \ref{theory:th3}. 

As mentioned in the Results section, the analysis of this manuscript yields tighter LGI bound under the assumption of projective measurements with values ±1. We note that under different assumptions, the LGI parameter can reach the algebraic bound of 4 \cite{budroni-prl2014}. 

    \FloatBarrier
	\bibliographystyle{unsrt}
    \bibliography{qbib}
	
\end{document}